\documentclass[10pt,conference]{IEEEtran}
\usepackage{xcolor,soul,framed} 
\usepackage[noadjust]{cite}
\usepackage[
bookmarks=false,
hidelinks
]{hyperref}
 
\colorlet{shadecolor}{yellow}

\usepackage[pdftex]{graphicx}
\graphicspath{{../pdf/}{../jpeg/}}
\DeclareGraphicsExtensions{.pdf,.jpeg,.png}
\usepackage[caption=false, font=footnotesize]{subfig}
\usepackage{amssymb,mathtools}
\usepackage{array}
\usepackage{mdwmath}
\usepackage{mdwtab}
\usepackage{eqparbox}
\usepackage{url}

\usepackage{enumitem}

\usepackage[ruled,vlined,linesnumbered]{algorithm2e}
\usepackage{enumitem}
\usepackage[mathcal]{euscript} 
\usepackage{tabularx} 
\usepackage{multirow}
\usepackage{booktabs}
\usepackage{makecell} 
\usepackage{aligned-overset}
\usepackage{multicol}
\usepackage{balance}
\usepackage{bbm}
\usepackage{amsmath}
\usepackage{amsthm}
\newtheoremstyle{mytheorem}
  {0pt}
  {0pt}
  {\normalfont} 
  {}
  {\bfseries\itshape} 
  {.}
  {0.5em} 
  {} 
\theoremstyle{mytheorem}
\usepackage[top=0.7in, bottom=1in, left=0.625in, right=0.625in]{geometry}
\usepackage{pifont}

\graphicspath{{./Figures/}}
\newtheorem{Remark}{\textit{Remark}}

\newtheorem{Theorem}{Theorem}

\renewcommand{\vec}[1]{\boldsymbol{\mathrm{#1}}}

\newcommand{\IRS}{\mathtt{RIS}}
\newcommand{\RIS}{\mathtt{RIS}}

\newcommand{\user}{\mathtt{UE}}

\newcommand{\ap}{\mathtt{AP}}

\newcommand{\pl}{\ell}
\usepackage[hidelinks]{hyperref}
\usepackage{xcolor}

\begin{document}
      \title{RIS-Assisted Reliability Maximization for URLLC in Upper Mid-Band 6G Networks}
\author{
  \IEEEauthorblockN{Muddasir Rahim, Irfan Azam, and Soumaya Cherkaoui}
  \IEEEauthorblockA{Department of Computer Engineering and Software Engineering, Polytechnique Montreal, Canada} \IEEEauthorblockA{Emails: muddasir.rahim@polymtl.ca, irfan.azam@polymtl.ca, soumaya.cherkaoui@polymtl.ca}
}

\maketitle
\begin{abstract}
In sixth-generation (6G) networks, supporting ultra-reliable low-latency communication (URLLC) services imposes stringent requirements on reliability, latency, and efficient resource utilization. The upper mid-band (FR3) spectrum has emerged as a promising candidate due to its favorable trade-off between coverage and bandwidth availability. However, achieving high reliability in dense deployments remains challenging due to interference and propagation impairments. To address these challenges, this paper investigates a reconfigurable intelligent surface (RIS)-assisted network operating in the FR3 band. We formulate a joint transmit power allocation and user equipment (UE)-RIS association problem to maximize network reliability under finite blocklength constraints. The resulting problem is nonconvex and combinatorial due to interference coupling and binary association variables. To solve this problem efficiently, we propose a multiphase resource allocation framework that integrates a successive convex approximation (SCA)-based power optimization scheme with a matching theory-based UE-RIS association algorithm. Simulation results show that the proposed scheme outperforms greedy and random baselines and achieves near-optimal performance close to the exhaustive search scheme.
\end{abstract}
\begin{IEEEkeywords}
Reconfigurable intelligent surface (RIS), FR3, ultra-reliable low-latency communication (URLLC).
\end{IEEEkeywords}
\section{INTRODUCTION}\label{introduction}
\IEEEPARstart{T}{he} rapid evolution of wireless applications has introduced diverse and demanding requirements for futuristic wireless communication networks, including ultra-high data rates, low latency, and high reliability~\cite{ranjha2022urllc,rahim2026reliable}. In particular, ultra-reliable low-latency communication (URLLC) services require strict reliability guarantees under limited latency budgets, which poses significant challenges in dense network deployments with constrained power resources and strong interference. Consequently, sixth-generation (6G) networks must support massive connectivity while ensuring dependable communication for heterogeneous devices and emerging data-intensive applications such as extended reality (XR) and ultra-high-definition video streaming \cite{pala2023spectral}.

To meet these requirements, the exploration of new spectrum bands has become essential. Among the candidate frequency ranges, the upper mid-band (UMB), also referred to as frequency range 3 (FR3) (7-15 GHz), has recently gained significant attention due to its favorable trade-off between bandwidth availability and coverage~\cite{chaves2024coverage}. Compared with conventional sub-6 GHz bands, FR3 offers wider bandwidths that enable higher data rates, while avoiding the severe propagation losses associated with millimeter-wave and terahertz frequencies~\cite{rahim2026dual}. As a result, FR3 is increasingly recognized as a key enabler for 6G systems. However, achieving reliable communication in FR3 remains challenging due to non-negligible path loss and frequent line-of-sight (LoS) blockages in cluttered environments, particularly in dense network deployments. These limitations motivate the adoption of environment-aware technologies capable of dynamically shaping the wireless propagation environment. Reconfigurable intelligent surfaces (RISs) have emerged as a promising technology to address these challenges by enabling programmable wireless environments. Specifically, RISs consist of a large number of passive reflecting elements that can intelligently manipulate the propagation of electromagnetic waves, thereby enhancing signal quality and extending coverage~\cite{rahim2024jpusa}. While RIS-assisted communications have been extensively studied in millimeter-wave and terahertz bands, their integration into FR3 systems remains relatively underexplored. This gap is particularly pronounced in URLLC scenarios, where a large number of low-power devices compete for limited resources, and reliability becomes a critical performance metric.
 
In this context, increasing research efforts have focused on investigating the performance limits and design strategies for FR3-based networks. However, studies specifically addressing MIMO techniques in the FR3 band remain relatively limited~\cite{heath2024beamsharing,bjornson2025enabling,tian2024mid}. For instance, the work in~\cite{heath2024beamsharing} considered a MIMO system supporting both near-field and far-field users and proposed a user pairing strategy combined with beam sharing to accommodate heterogeneous users. In~\cite{bjornson2025enabling}, the term gigantic MIMO (gMIMO) is used to refer to arrays with at least 256 antenna ports, emphasizing deployments that go beyond typical ultra-massive MIMO configurations. Furthermore,~\cite{tian2024mid} investigated extra large-scale MIMO (XL-MIMO) for mid-band communications and developed analytical models to evaluate key performance metrics such as spectral efficiency and outage probability.

In parallel, RISs have been widely studied as an enabling technology for enhancing wireless communication performance. The survey in~\cite{mohsan2023irs} provides a detailed overview of RIS-assisted UAV communications, including system architectures, channel modeling, and common design goals. The authors in~\cite{abdalla2020uavs} further explored the integration of RISs with UAV platforms, highlighting potential applications and open research issues in future wireless networks. A more design-oriented contribution is presented in~\cite{bui2025joint}, where a joint optimization framework was proposed for transmit power allocation and RIS phase-shift tuning, aiming to improve energy efficiency while ensuring reliable service for all users. Meanwhile,~\cite{farre2025dynamic} examined the integration of RISs in combined non-terrestrial and terrestrial networking, targeting improved connectivity and capacity in dense environments by optimizing coverage as well as interference suppression. The work in~\cite{you2022enabling} also introduced two complementary schemes: RIS-assisted UAV links and aerial-RIS-assisted terrestrial links, illustrating how UAVs and RISs can be jointly exploited to support integrated aerial-terrestrial communications in next-generation networks.

Despite these advances, RIS research in FR3 is still in its early stages. To the best of the authors’ knowledge,~\cite{kara2024reconfigurable} is one of the few works that explicitly demonstrates the relevance of RISs in the FR3 band. Overall, the RIS literature remains heavily concentrated on FR2 (mmWave/THz) and primarily single-tier terrestrial scenarios, while multi-tier operation in FR3, including the combined use of aerial and terrestrial RISs, has received comparatively little attention, even though FR3 is becoming increasingly important for 6G and exhibits different coverage and blockage trade-offs.

\textbf{Contributions:} Existing works in RIS-assisted networks typically optimize power allocation or user association independently, despite their interdependence. Moreover, Shannon capacity formulations are inadequate for URLLC with short packets and strict reliability. To address these limitations, we propose a unified framework for reliability-oriented resource allocation in RIS-assisted FR3 networks. The main contributions are summarized as follows:
\begin{itemize}
  \item We develop a RIS-assisted system model operating in the FR3 band, explicitly incorporating finite blocklength communication to capture URLLC reliability requirements. Based on this model, we formulate a joint transmit power allocation and UE-RIS association problem to maximize network reliability. 
  \item To solve the resulting nonconvex and combinatorial problem, we propose a multiphase framework that integrates a successive convex approximation (SCA)-based power optimization scheme with a matching-theory-based UE-RIS association algorithm, enabling efficient and scalable implementation.
  \item Simulation results demonstrate that the proposed scheme outperforms greedy and random baselines and achieves near-optimal performance close to the exhaustive search scheme.
\end{itemize}
\section{System Model} \label{model}
We consider a network consisting of a single AP equipped with $N$ antennas serving the set of user equipment (UE) $\mathcal{K} = \{\user_1, \user_2, \ldots, \user_k, \ldots, \user_K\}$ supporting URLLC services. The UEs receive their signals either through direct links from the AP or via one of the available RISs when the direct path is weak or blocked. The channels between the AP and the RISs, as well as between the RISs and the UEs, are modeled as quasi-static fading channels in the FR3 band. To enhance reliability in environments with potential blockages, the network is assisted by $L$ RISs. The RIS set is denoted by $\mathcal{L} = \{\RIS_1, \RIS_2, \ldots, \RIS_l, \ldots, \RIS_L\}$. Each RIS comprises $M = M_y M_z$ passive reflectors arranged in a uniform planar array, where $M_y$ and $M_z$ denote the number of reflectors along the horizontal and vertical directions, respectively. The distance from the AP to  $k^{th}$ UE ($\user_{k}$) is $d_{a, k}$, from the AP to the center of $l^{th}$ RIS ($\RIS_{l}$) is $d_{a, l}$, and from $\RIS_{l}$ to $\user_{k}$ is denoted as $d_{l, k}$.
\begin{figure}[!t]
     \centering
\includegraphics[width=0.9\linewidth]{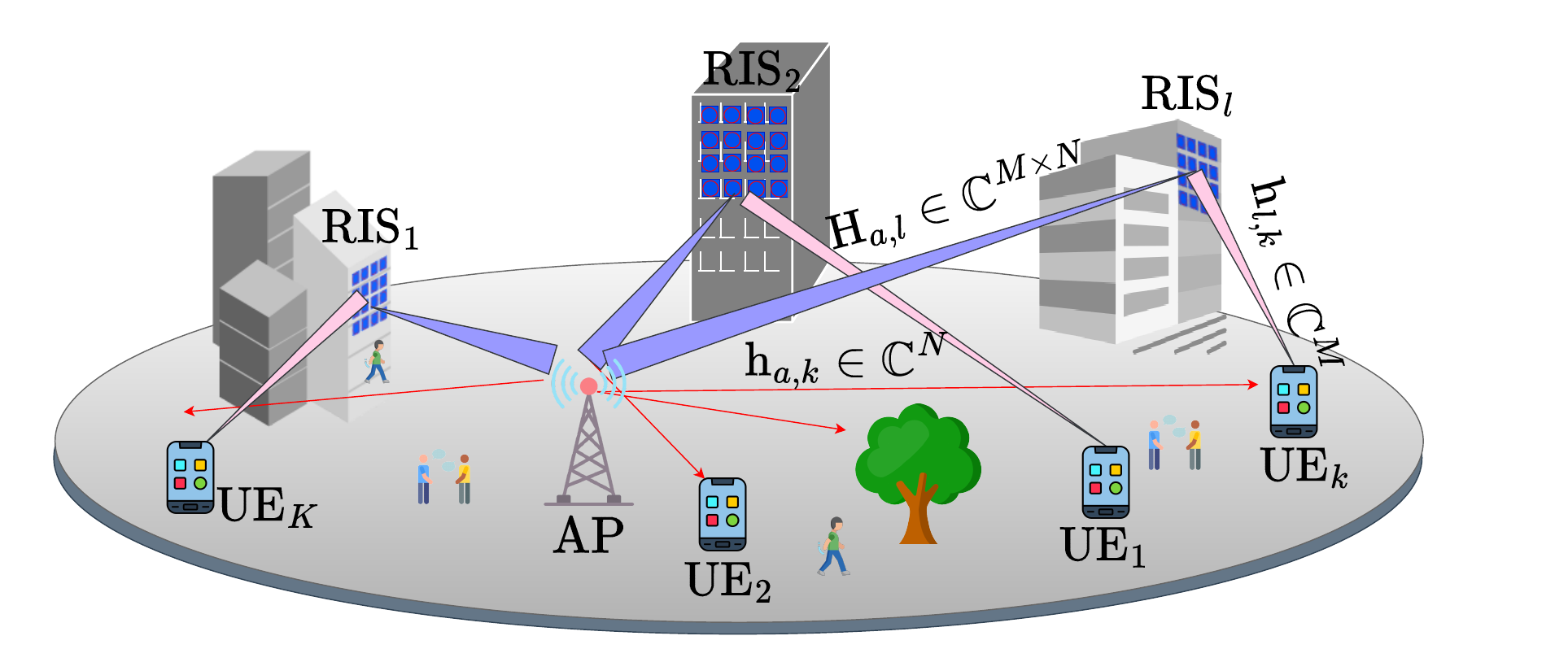}
    \caption{System model of RIS-assisted URLLC network in FR3 band.
     }
     \label{m1}   
\end{figure}
\subsubsection{\textbf{Channel Model}}
We consider a downlink communication scenario that includes both the direct link and RIS-assisted reflected links. Let $\vec{h}_{a,k}\in \mathbb{C}^{N}$  denote the direct channel vector from $\ap$-to-$\user_k$, where the channel coefficient between $n^{th}$ antenna of the AP ($\ap_n$)-to-$\user_k$ is $\textstyle h_{a,n,k} = \sqrt{\pl_{a,n,k}} \, e^{-j \omega d_{a,k}},$
where $\pl_{a,n,k}$ represents the pathloss between $\ap_n$ and $\user_{k}$ and $\omega = \frac{2\pi f_c}{c}$ is the wave number at the carrier frequency $f_c$. The channel matrix between $\ap$ and $\RIS_l$ is denoted by $\vec{H}_{a,l}\in \mathbb{C}^{M \times N}$ is, where the channel coefficient between $\ap_n$ and $m^{th}$ reflecting element of $\RIS_l$ ($\RIS_{l,m}$) is $\textstyle  [\vec{H}_{a,l}]_{n,m} =h_{n,l,m} = \sqrt{\pl_{n,l,m}} \, e^{-j \omega d_{a,l}},$
where $\pl_{n,l,m}$ represents the pathloss from $\ap_n$-to-$\RIS_{l,m}$. Finally, the channel vector from $\RIS_{l}$-to-$\user_k$ is denoted by $\vec{h}_{l,k} \in \mathbb{C}^{M}$, where the channel coefficient corresponding to $\RIS_{l,m}$ is given by $\textstyle  [\vec{h}_{l,k}]_m= h_{l,m,k} = \sqrt{\pl_{l,m,k}} \, e^{-j \omega d_{l,k}},$
where $\pl_{l,m,k}$ is the pathloss from $\RIS_{l,m}$-to-$\user_k$. Let $\vec{\Omega}$ denote the UE-RIS association matrix, where $\Omega_{k,l}\in\{0,1\}$ is a binary association variable indicating whether $\user_k$ is associated with $\IRS_l$.
Since each UE is associated with at most one RIS, we define $\vec{h}_{a,l,k}\in \mathbb{C}^{N}$ as the effective channel vector from $\ap$-to-$\user_k$ corresponding to the direct path and the reflected path via $\IRS_l$, which can be written as
\begin{align}\label{eq_channel}
  \textstyle  \vec{h}_{a,l,k} =\textstyle\vec{h}_{a,k} +\textstyle \Omega_{k,l} \vec{H}_{a,l}^{\sf H} \, \vec{\Theta}_l \, \vec{h}_{l,k},
 \end{align}
 where $\vec{\Theta}_l = \mathrm{diag}([\kappa_{l,1} e^{j \theta_{l,1}}, \ldots,\kappa_{l,m} e^{j \theta_{l,m}}, \ldots, \kappa_{l,M} e^{j \theta_{l,M}}])$ denotes the RIS reflection matrix of $\IRS_l$ and $\kappa_{l,m} \in [0,1]$ and $\theta_{l,m} \in [0, 2\pi)$ represent amplitude reflection coefficients and phase shift of the $m^{th}$ element, respectively. The RIS phase shifts are assumed fixed based on a statistical channel state information (CSI) design, where configurations are optimized over large-scale channel parameters and updated at a slower timescale. This reduces signaling overhead and is suitable for URLLC with strict latency constraints. 
\subsubsection{\textbf{Signal Model and Data Rate of URLLC}}
We consider linear precoding at the AP, where the transmit beamforming matrix is denoted by ${\vec{W}} \in \mathbb{C}^{N \times K}$, with ${\vec{w}}_k \in \mathbb{C}^{N}$ representing the beamforming vector associated with $\user_k$. Each beamforming vector is expressed as ${\vec{w}}_k = \sqrt{p_k}\widehat{\vec{w}}_k,$ where $p_k$ denotes the transmit power allocated to $\user_k$ and $\widehat{\vec{w}}_k$ is the corresponding unit-norm beamforming direction, i.e., $\Vert\widehat{\vec{w}}_k\Vert^2=1$. Given a total  power budget $P_{\max}$, the power constraint is $\sum_{k=1}^K \Vert{\vec{w}}_k\Vert^2 =\sum_{k=1}^K p_k\leq P_{\max}.$ 
Let $s_k$ denote the information symbol intended for $\user_k$ and $\vec{x} \in \mathbb{C}^{N}$ be the transmit signal vector from the AP to UEs, which is given by 
\begin{align} \label{transmit_signal}
\textstyle\vec{x}=\sum_{{k=1 }}^K{\vec{w}}_k s_k=\sum_{{k=1  }}^K\sqrt{p_k}\widehat{\vec{w}}_k s_k.
\end{align}
The received signals at $\user_k$ is expressed as 
\begin{align}\label{reciver_1}
  \textstyle  y_{k,l}& = \sqrt{p_{k}}(\mathbf{h}_{a,l,k})^H \widehat{\vec{w}}_k s_k +  \textstyle\sum_{i \neq k}^{K}\sqrt{p_{i}}(\mathbf{h}_{a,l,k})^H \widehat{\vec{w}}_i s_i + n_k\nonumber\\ &=\underbrace{ \textstyle\sqrt{p_{k}}(\vec{h}_{a,k} +\Omega_{k,l} \vec{H}_{a,l}^{\sf H}  \vec{\Theta}_l \vec{h}_{l,k})^H \widehat{\vec{w}}_k s_k}_{\text{desired signal}}\nonumber\\ & + \underbrace{ \textstyle\sum_{i \neq k}^{K}\sqrt{p_{i}}(\vec{h}_{a,k} +\Omega_{k,l} \vec{H}_{a,l}^{\sf H}  \vec{\Theta}_l \vec{h}_{l,k})^H \widehat{\vec{w}}_i s_i}_{\text{interference}}\!+\! \underbrace{n_k}_{\text{noise}},
\end{align}
where $n_k\sim \mathcal{CN}(0, \sigma_k^2)$ represents the additive white Gaussian noise (AWGN). Accordingly, the signal-to-noise-plus-interference ratio (SINR) at $\user_k$ is given by
   \begin{align}\label{sinr}  \Lambda_{k,l}=\frac{p_k\Vert(\vec{h}_{a,k} +\Omega_{k,l} \vec{H}_{a,l}^{\sf H}  \vec{\Theta}_l \vec{h}_{l,k})^H \widehat{\vec{w}}_k \Vert^2 }{\!\!\sum_{i\neq k}^K\! p_i\big(\Vert(\vec{h}_{a,k}\!\!+\! \Omega_{k,l}\vec{H}_{a,l}^{\sf H}  \vec{\Theta}_l \vec{h}_{l,k})^H \widehat{\vec{w}}_i\Vert^2\big)\!+\!\sigma_k^2}.
 \end{align}
In URLLC, the well-known Shannon's data rate formula for long block-length wireless communications is inaccurate. Thus, we use the finite block length formula to calculate the achievable rate of URLLC UEs, which can be expressed as~\cite{ranjha2022urllc}
\begin{align}\label{rate_u}
  \textstyle  R_{k,l} & = \textstyle \log_2 \big( 1 + \Lambda_{k,l} \big)-\sqrt{\frac{V_{k,l}}{\zeta_{k}}}Q^{-1}(\epsilon_{k,l})+\frac{O(\log \zeta_{k})}{\zeta_{k}}\nonumber\\&  \textstyle\approx \log_2 \big( 1 + \Lambda_{k,l} \big)-\sqrt{\frac{V_{k,l}}{\zeta_{k}}}Q^{-1}(\epsilon_{k,l}),
\end{align}
where $\zeta_{k}$, $\epsilon_{k,l}$, and $V_{k,l}$ denote the packet length, error probability, and channel dispersion, respectively. The channel dispersion is defined as $V_{k,l}=1-{1}/{(1+\Lambda_{k,l})^{2}}.$ Furthermore, $Q^{-1}(\epsilon_{k,l})$ is the inverse Q-function of $\epsilon_{j}$. By rearranging \eqref{rate_u}, the decoding error probability can be expressed as follows:
\begin{align}\label{error_pr}
 \textstyle \epsilon_{k,l}=Q\Big( \sqrt{\frac{\zeta_{k}}{V_{k,l}}}\big(\log_2 ( 1+ \Lambda_{k,l} )-R_{k,l}\big)\Big).
\end{align}
 Accordingly, the reliability of $\user_k$ via $\IRS_l$ is defined as
\begin{align}\label{reliability}
 \Psi_{k,l} = 1- \epsilon_{k,l}.
\end{align}
\subsubsection{\textbf{Optimization Problem Formulation}}\label{pform}
In URLLC networks, stringent reliability and low-latency requirements introduce significant challenges in resource management, especially in interference-limited environments with dynamic network topology. To address these challenges, this work focuses on maximizing the network sum reliability by jointly optimizing the transmit power allocation and UE-RIS association. Accordingly, the optimization problem is formulated to enhance the reliability of all users under finite blocklength transmission as follows: 
\begin{subequations}\label{eq_opt_prob}
\begin{alignat}{2}
&\mathcal{P}: \underset{\vec{P, \,\Omega}}{\text{ max}}
& 
&\quad\textstyle\sum_{k=1}^K \Omega_{k,l}\Psi_{k,l} (\vec{P, \Omega}), \label{eq_optProb}\\ 
&\text{s.t.} 
 &&\mathrm{C}1:\quad  p_k \geq 0, \quad\forall k\in K, \label{eq_constraint1}\\
 &&&\mathrm{C}2:\quad 
 \textstyle \sum_{k=1}^K p_k\leq P_{\max},\label{eq_constraint2}\\
 &&&\mathrm{C}3:\quad  \Omega_{k,l}\in\{0,1\}, \ \forall k,l, \label{eq_constraint3}\\
 &&&\mathrm{C}4:\quad  \textstyle \sum_{l=1}^{L} \Omega_{k,l} \le 1,\ \forall k, \label{eq_constraint4}\\ 
 &&&\mathrm{C}5:\quad  \textstyle\sum_{k=1}^{K} \Omega_{k,l} \le 1,\ \forall l, \label{eq_constraint5}
\end{alignat}
\end{subequations}
where $P_{\max}$ is the maximum transmit power budget, constraints~$\mathrm{C}1$ and~$\mathrm{C}2$ ensure that the transmit power of all users is always positive and the total power does not exceed the power budget $P_{\max}$, respectively. Furthermore, constraint~$\mathrm{C}3$  enforces binary assignment to complete the one-to-one matching, constraint~$\mathrm{C}4$ ensures that each UE is associated with at most one RIS, while constraint~$\mathrm{C}5$ restricts each RIS to serve at most one UE.
\begin{Remark}\label{R1}
Problem $\mathcal{P}$ in \eqref{eq_opt_prob} is a mixed-integer nonconvex optimization problem due to the coupling between $\vec{P}$ and $\vec{\Omega}$. Even with fixed power allocation, the UE-RIS association subproblem remains combinatorial under \eqref{eq_constraint3}--\eqref{eq_constraint5}. Hence, exhaustive search becomes impractical for moderate network sizes, which motivates the proposed two-phase solution.
\end{Remark}
\section{Proposed Resource Allocation Framework}\label{proposed} 
Problem~\eqref{eq_opt_prob} is NP-hard as established in Remark~\ref {R1}. To address this, we adopt a sequential optimization framework in which the AP first determines the downlink transmit power allocation, followed by the optimization of UE-RIS association. This decoupled approach enables efficient resource allocation while enhancing the received signal quality and overall network reliability in the FR3 band.
\subsection{SCA-based Power Optimization}
Given the RIS association matrix $\vec{\Omega}$, we optimize the downlink transmit power allocation of UEs. The power optimization subproblem is given by
\begin{subequations}\label{eq_opt_prob1.2}
\begin{alignat}{2}
& \mathcal{P}1: \underset{\vec{\mathbf{P}}}{\text{ max}}
& 
&\textstyle\quad\quad\sum_{k=1}^K \Psi_{k,l} (\vec{ P\,|\, \Omega}), \label{eq_optProb1.2}\\ 
&\text{s.t.} 
 &&\quad\quad  \mathrm{C}1, \, \mathrm{C}2.\label{eq_constraint2.2}
\end{alignat}
\end{subequations}
The objective function in~\eqref{eq_opt_prob1.2} is nonconvex due to the interference-coupled SINR, even for the given $\vec{\Omega}$. Therefore, to address this issue, we use SCA to solve the subproblem. The SINR in~\eqref{sinr} can be written as
  \begin{align}\label{SCA_1}
     \Lambda_{k,l}& =\frac{p_k {g_{k,k}}}{\textstyle\sum_{ i\neq k}^K p_ig_{k,i}+\sigma_k^2}=\frac{p_k {g_{k,k}}}{I_k(\vec{P})},
 \end{align}  
where $I_k(\vec{P})=\sum_{ i\neq k}^K p_ig_{k,i}+\sigma_k^2$ is interference, $g_{k,k}=\Vert(\vec{h}_{a,k} +\Omega_{k,l} \vec{H}_{a,l}^{\sf H}  \vec{\Theta}_l \vec{h}_{l,k})^H \widehat{\vec{w}}_k  \Vert^2$ is the effective desired channel gain and $g_{k,i}=\Vert(\vec{h}_{a,k} +\Omega_{k,l} \vec{H}_{a,l}^{\sf H}  \vec{\Theta}_l \vec{h}_{l,k})^H \widehat{\vec{w}}_i\Vert^2$ is the interference channel gain  at $\user_k$. Using~\eqref{reliability}, maximizing the sum reliability is equivalent to minimizing the error probability. Thus, the objective function in~\eqref{eq_opt_prob1.2} is written as
\begin{align}
\textstyle \underset{\vec{\mathbf{P}}}{\text{ max}}\sum_{k=1}^K \Psi_{k,l} (\vec{P| \Omega})\approx \underset{\vec{\mathbf{P}}}{\text{ min}}\sum_{k=1}^K \epsilon_{k,l} (\vec{ P| \Omega}). \label{eq_optProb1.2.1}
\end{align}
Here, $\epsilon_{k,l}$ is a Q-function which is difficult to solve directly due to its nonlinear nature. Therefore, we introduce an auxiliary variable as
\begin{align}\label{variable}
    \textstyle \eta_{k,l}= \sqrt{{\zeta_{k}}/{V_{k,l}}}\big(\log_2 ( 1+ \Lambda_{k,l} )-R_{k,l}\big)
\end{align}
\begin{align}\label{variable_1}
  \textstyle   \epsilon_{k,l}=Q( \eta_{k,l}).
\end{align}
Since the Q-function in~\eqref{variable_1} is monotonically decreasing, minimizing $\epsilon_{k,l}$ is equivalent to maximizing $\eta_{k,l}$. Therefore, maximization in \eqref{eq_optProb1.2.1} can be approximated as
\begin{align}
\textstyle\underset{\vec{\mathbf{P}}}{\text{ max}}\sum_{k=1}^K \Psi_{k,l} (\vec{ P| \Omega})\approx\underset{\vec{\mathbf{P}}}{\text{ max}}\sum_{k=1}^K\eta_{k,l} (\vec{ P|\Omega}). \label{eq_optProb1.2.2}
\end{align}
To make the problem tractable, the channel dispersion is evaluated at the current SCA iterate and kept fixed for the next iteration. Accordingly, we rewrite $\eta_{k,l}$ as
\begin{align}\label{variable_2}
 \textstyle   \eta_{k,l}= \sqrt{{\zeta_{k}}/{V_{k,l}^{(t)}}}\big(\log_2 ( 1+ \Lambda_{k,l} )-R_{k,l}\big).
\end{align}
Since $R_{k,l}$ is constant, $\eta_{k,l}$ can be rewritten as $\textstyle  \eta_{k,l}\approx \sqrt{{\zeta_{k}}/{V_{k,l}^{(t)}}}\big(\log_2 ( 1+ \Lambda_{k,l} )\big).$
Now using~\eqref{SCA_1}, $\eta_{k,l}(\vec{P})$ can be equivalently rewritten as
{\small
\begin{align}\label{SCA_2}
\textstyle\eta_{k,l}(\vec{P})=\textstyle\sqrt{{\!\zeta_{k}}/{V_{k,l}^{(t)}}}\!\log_2(\!I_k(\vec{P})+p_k {g_{k,k}})\!\!-\! \sqrt{{\!\zeta_{k}}/{V_{k,l}^{(t)}}}\!\log_2(I_k(\vec{P})).
 \end{align}}
\begin{algorithm}[!t]
\caption{Proposed Resource Allocation Algorithm for Problem $\mathcal{P}$}
\label{algo:proposed}
\DontPrintSemicolon
\KwIn{$K$, $L$, $P_{\max}$, convergence threshold $\xi > 0$, utility matrix $[\Psi]_{k,l}$, Set of unmatched UEs $\Pi$}
\BlankLine
\textbf{Initialize:} ${P}^{(0)}={P_{\max}}/{K}$, $\boldsymbol{\Gamma} = \emptyset$, 
$\Pi \leftarrow \{1,\dots,K\}$

\BlankLine
\underline{\textbf{Phase 1: Power Optimization ($\mathcal{P}1$):}}\\
 \While{\big($\frac{\|\mathbf{P}^{(t)} - \mathbf{P}^{(t-1)}\| }{ \|\mathbf{P}^{(t-1)}\|} > \xi$\big)} {
 \For {$k \in K$}{
 Compute $I_k(\mathbf{P}^{(t)})=\sum_{ i\neq k}^K p_i^{(t)}g_{k,i}+\sigma_k^2$\\
 Compute $\Lambda_{k,l}^{(t)}$ using~\ref{SCA_1}\\
 Compute $V_{k,l}^{(t)}=1-{1}/{(1+\Lambda_{k,l}^{(t)})^2}$
 }
  \For {$k \in K$ and $i \in K$}{
 Compute SCA coefficient as $ \varrho_{k,i}$ using~\eqref{SCA_4}
 }
Solve~\eqref{eq_opt_prob1.2.1} to obtain $\mathbf{P}^{(t+1)}$\\
 $t\leftarrow t+1$
 }
\BlankLine
\underline{\textbf{Phase 2: UE-RIS Association ($\mathcal{P}2$):}}\\
\For{$k \in K$}{
    Compute priority list $[\vec{\Gamma}]_{\user_k,\IRS_l}$ and $[\vec{\Gamma}]_{\IRS_l,\user_k}$ using~\eqref{reliability}\;
}
\While{\big($\Pi \neq \emptyset$ \textbf{and} there exists at least one UE that has not applied to all RISs\big)}{
    \For{($k \in \Pi$)}{
        $\user_k$ proposes to its highest-ranked $\IRS_l$ in $[\vec{\Gamma}]_{\user_k,\IRS_l}$ that has not yet rejected it\;
        
        \uIf{\text{RIS is free}}{
            $\Pi \leftarrow \Pi \setminus \{k\}$\;
               $\Omega_{k,1} = 1$\;
        }
        \Else{
            $\IRS_l$ check its priority list\;
            
            \uIf{$[\vec{\Gamma}]_{\IRS_l,\user_k} > [\vec{\Gamma}]_{\IRS_l,\user_{k'}}$}{
                $\Pi \leftarrow (\Pi \setminus \{k\}) \cup \{k'\}$\;
                   $\Omega_{k,l} = 1$\;
            }
            \Else{
                $\IRS_l$ rejects $\user_k$\;
            }
        }
    }
}
\BlankLine
\KwOut{$\mathbf{P}^\star$ and $\boldsymbol{\Omega}^{\star}$\;}
\end{algorithm}
 The first term and second term in \eqref{SCA_2} both are concave functions, therefore, $\eta_{k,l}(\vec{P})$ is difference of two concave functions, which keeps~\eqref{SCA_2} nonconvex. Meanwhile, we utilize SCA, in which we keep the first term and linearize the second term at iteration $t$ using the first-order Taylor expansion, which can be written as
 \begin{align}\label{SCA_3}
&\textstyle\sqrt{{\zeta_{k}}/{V_{k,l}^{(t)}}}\log_2(I_k(\vec{P})\textstyle\leq\sqrt{{\zeta_{k}}/{V_{k,l}^{(t)}}}\log_2(I_k(\vec{P^{(t)}}))\nonumber\\&\quad\quad+\textstyle\sum_{i=1}^K \sqrt{{\zeta_{k}}/{V_{k,l}^{(t)}}}\varrho_{k,i}^{(t)}(p_i-p_i^{(t)}),
 \end{align}
where
{\small
\begin{align}\label{SCA_4}
 \textstyle  \varrho_{k,i}^{(t)} =
\begin{cases}
{g_{k,i}}/{\ln{2}\,I_k(\vec{P^{(t)}})}, &  i\neq k, \\
0, &  i= k.
\end{cases}
\end{align}}
Since $\log_2\big(I_k(\vec{P})\big)$ is concave and the first-order Taylor expansion in \eqref{SCA_3} provides an upper bound which guarantees monotonic improvement in SCA. Constant terms that are independent of $\vec{P}$ are excluded since they do not affect the optimization. Utilizing \eqref{SCA_2} and \eqref{SCA_3}, we can rewrite the optimization problem $\mathcal{P}1$ in \eqref{eq_opt_prob1.2}, can be written as
\begin{subequations}\label{eq_opt_prob1.2.1}
\begin{alignat}{2}
& \mathcal{P}1.1\!:\!\!\!\underset{\vec{\mathbf{P}}}{\text{ max}}
&&\sum_{k=1}^K\!\!\sqrt{\frac{\zeta_{k}}{V_{k,l}^{(t)}}}\big[\! \log_2(\sum_{ i=1}^K \!p_ig_{k,i}\!\!+\!\!\sigma_k^2)\!\!-\!\!\sum_{i=1}^K\! \varrho_{k,i}^{(t)}p_i\big], \label{eq_optProb1.2.3}\\ 
&\text{s.t.} 
 &&\quad \mathrm{C}1, \, \mathrm{C}2.\label{eq_constraint2.2.1}
\end{alignat}
\end{subequations}
\begin{Theorem}\label{theorem2}
    For a given RIS association matrix $\vec{\Omega}$, the transformed subproblem $\mathcal{P} 1.1$ is convex.
\end{Theorem}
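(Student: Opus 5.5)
The plan is to show that $\mathcal{P}1.1$ maximizes a concave function over a convex set; that is exactly what convexity of a maximization problem means. The problem splits into its feasible set and its objective, and I will treat each in turn, using only the structure already set up in \eqref{SCA_1}--\eqref{SCA_4}.

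\textbf{Feasible set.} Constraint $\mathrm{C}1$ consists of the linear inequalities $p_k\ge 0$. Constraint $\mathrm{C}2$ is the single linear inequality $\sum_{k=1}^K p_k\le P_{\max}$. Their intersection is a polyhedron (a scaled simplex), so the feasible set is convex. It is also compact, which guarantees that a maximizer exists.

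\textbf{Objective, one summand at a time.} The key observation is that inside the $t$-th SCA iteration all of the following are fixed numbers independent of $\vec{P}$:
\begin{itemize}
\item the weight $\sqrt{\zeta_k/V_{k,l}^{(t)}}$, because $V_{k,l}^{(t)}$ is frozen at the current iterate;
\item the coefficients $\varrho_{k,i}^{(t)}$, defined in \eqref{SCA_4} from $\vec{P}^{(t)}$;
\item the channel gains $g_{k,i}\ge 0$, since for given $\vec{\Omega}$, fixed $\vec{\Theta}_l$ and fixed directions $\widehat{\vec{w}}_i$ they are squared norms.
\end{itemize}
With these fixed, each summand has three parts:
\begin{itemize}
\item The map $\vec{P}\mapsto \sum_{i=1}^K p_i g_{k,i}+\sigma_k^2$ is affine, and it is strictly positive on the feasible set because $\sigma_k^2>0$.
\item Composing this affine map with $\log_2(\cdot)$, which is concave and nondecreasing on $(0,\infty)$, gives a concave function of $\vec{P}$. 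This follows from the standard rule that a concave function composed with an affine map is concave; equivalently, its Hessian is $-\frac{1}{\ln 2}\,\vec{g}_k\vec{g}_k^{\sf T}/(\cdot)^2\preceq 0$, with $\vec{g}_k=[g_{k,1},\dots,g_{k,K}]^{\sf T}$.
\item The term $-\sum_{i=1}^K\varrho_{k,i}^{(t)}p_i$ is linear, so it is both concave and convex.
\end{itemize}
Hence each bracket in \eqref{eq_optProb1.2.3} is concave.

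\textbf{Summing up.} The weight $\sqrt{\zeta_k/V_{k,l}^{(t)}}$ is a positive constant, since $\zeta_k>0$ and $V_{k,l}^{(t)}\in(0,1)$ whenever $\Lambda_{k,l}^{(t)}>0$. A nonnegative weighted sum of concave functions is concave, so the objective of $\mathcal{P}1.1$ is concave. $\mathcal{P}1.1$ therefore maximizes a concave objective over a polyhedron and is a convex program, which can be solved globally, for instance by interior-point methods.

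\textbf{Main obstacle.} The delicate step is justifying that the weights are constants and positive. If $V_{k,l}$ were kept as a function of $\vec{P}$, as in \eqref{variable}, the product of $\sqrt{\zeta_k/V_{k,l}}$ with the log term would not be concave in general. So the proof must state explicitly that $V_{k,l}^{(t)}$ is evaluated at $\vec{P}^{(t)}$ and frozen, as done before \eqref{variable_2}.

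A degenerate case also needs care. If $\Lambda_{k,l}^{(t)}=0$, for example when $p_k^{(t)}=0$, then $V_{k,l}^{(t)}=0$ and the weight is undefined. I would handle this by noting that the initialization $P^{(0)}=P_{\max}/K>0$ gives positive SINRs. To keep the weight finite and positive at every iterate, I would either add a small regularization to $V_{k,l}^{(t)}$ or restrict to $p_k\ge\delta>0$; neither change affects convexity.
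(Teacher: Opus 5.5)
Your proposal is correct and follows essentially the same route as the paper. With $V_{k,l}^{(t)}$ and $\varrho_{k,i}^{(t)}$ frozen at the current iterate, the objective of $\mathcal{P}1.1$ is a positively weighted sum of $\log_2$ of an affine function minus a linear term, so it is concave and is maximized over the polyhedron defined by $\mathrm{C}1$--$\mathrm{C}2$. Your version is more careful than the paper's sketch: you check that the weights are positive and you flag the degenerate case $V_{k,l}^{(t)}=0$, which the paper does not address.
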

\begin{IEEEproof}
From \eqref{variable_1} and \eqref{SCA_2}, the objective is given as
  {\small
  \begin{align}\label{SCA_2.1}
\textstyle\eta_{k,l}(\vec{P})\!\!=\!\!\textstyle\sqrt{\frac{\zeta_{k}}{V_{k,l}^{(t)}}}\log_2(I_k(\vec{P})\!+\!p_k {g_{k,k}})\!-\!\! \sqrt{\frac{\zeta_{k}}{V_{k,l}^{(t)}}}\log_2(I_k(\vec{P})),
 \end{align}}
which is a difference of two concave functions. Using the first-order Taylor approximation in~\eqref{SCA_3}, the second term is upper bounded as
{\small  \begin{align}\label{SCA_3.1}
&\textstyle\sqrt{{\zeta_{k}}/{V_{k,l}^{(t)}}}\log_2(I_k(\vec{P})\textstyle\leq\sqrt{{\zeta_{k}}/{V_{k,l}^{(t)}}}\log_2(I_k(\vec{P^{(t)}}))\nonumber\\&\quad\quad+\textstyle\sum_{i=1}^K \sqrt{{\zeta_{k}}/{V_{k,l}^{(t)}}}\varrho_{k,i}^{(t)}(p_i-p_i^{(t)})
 \end{align}}
 Substituting this bound into the objective yields a concave function. Since constraints $\mathrm{C}1$ and $\mathrm{C}2$ are convex, problem $\mathcal{P}1.1$ is convex.
\end{IEEEproof}
Based on Theorem~\ref {theorem2}, subproblem $\mathcal{P}1.1$ is convex and can be efficiently solved using standard convex solvers such as CVX, as described in Phase~1 of Algorithm~\ref{algo:proposed}.
\subsection{Deferred Acceptance Matching-based UE-RIS Association}
Then, we use the power allocation matrix obtained in subproblems $\mathcal{P}1.1$, to formulate the UE-RIS association as
\begin{subequations}\label{eq_opt_prob1.3}
\begin{alignat}{2}
& \mathcal{P}2: \underset{\vec{\Omega}}{\text{ max}}
& 
&\quad\quad\textstyle\sum_{k=1}^K \Omega_{k,l} \Psi_{k,l} (\vec{\Omega\,|\,P^{\star} }), \label{eq_optProb1.3}\\
&\text{s.t.} 
 &&\quad\quad  \mathrm{C}1-\mathrm{C}5.\label{eq_constraint5.3.2}
\end{alignat}
\end{subequations}
We deploy multiple RISs to assist the downlink communication. The UE-RIS association problem $\mathcal{P}2$ is formulated as a one-to-one matching problem, where each UE can be associated with at most one RIS within a transmission slot. As illustrated in Phase~2 of Algorithm~\ref{algo:proposed}, the association procedure follows a UE-proposing deferred-acceptance mechanism. The preference profiles of both UEs and RISs are constructed based on reliability-oriented metrics. Specifically, each UE ranks the RISs according to the achievable reliability, which is determined by the decoding success probability under finite blocklength transmission. Initially, all UEs are unmatched and maintain preference lists of RISs sorted in descending order of their achievable reliability. In each iteration, every unmatched UE proposes to its most preferred RIS that has not yet rejected it. Upon receiving one or more proposals, each RIS evaluates the reliability performance of the proposing UEs and tentatively accepts the UE that yields the highest reliability, while rejecting the others. If a RIS is already tentatively matched, it compares the new proposal with its current match and retains the UE that achieves higher reliability. Rejected UEs remove the corresponding RIS from their preference lists and proceed to propose to their next preferred RIS in subsequent iterations. This iterative process, as depicted in Phase~2 of Algorithm~\ref{algo:proposed}, continues until no further proposals can be made, at which point a stable UE-RIS matching is obtained that maximizes the overall network reliability.
\begin{Theorem}
    The proposed matching algorithm converges in a polynomial number of iterations. Specifically, for a network with K UEs and L RISs, it terminates after at most $K\times L$ matching operations. Moreover, the resulting association matrix $\vec{\Omega}^\star$ corresponds to a stable solution.
\end{Theorem}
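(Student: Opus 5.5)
The argument is the standard Gale--Shapley one, applied to Phase~2 of Algorithm~\ref{algo:proposed} with the preference lists $[\vec{\Gamma}]_{\user_k,\IRS_l}$ and $[\vec{\Gamma}]_{\IRS_l,\user_k}$. These lists are built once from the reliabilities $\Psi_{k,l}$ in \eqref{reliability}, with the power fixed at $\vec{P}^\star$ from Phase~1. Once Phase~1 ends, the lists are therefore fixed, strict total orders; ties can be broken by UE/RIS index. The proof has three parts: a counting argument for termination, feasibility of the output with respect to $\mathrm{C}3$--$\mathrm{C}5$, and a contradiction argument for stability. Throughout, the matching operations are counted as proposals.

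\textbf{Termination.} Each UE proposes to RISs strictly in descending order of $[\vec{\Gamma}]_{\user_k,\IRS_l}$. A UE proposes to the same RIS at most once, because a rejected RIS is removed from its list. A UE that is displaced from a tentative match is added back to $\Pi$ and moves on to the next entry of its list; it never returns to the RIS that displaced it. So UE $k$ makes at most $L$ proposals, and all UEs together make at most $K\times L$. The while loop runs only while some unmatched UE still has a RIS it has not applied to. Each pass of the loop makes at least one new proposal, so the loop stops after at most $KL$ operations, which is polynomial.

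\textbf{Feasibility.} Each RIS holds at most one tentative partner at any time: it either accepts when it is free, or swaps its current partner for a better one. Each UE holds at most one partner, since it proposes only while it is in $\Pi$. Hence the final $\vec{\Omega}^\star$ is binary and satisfies $\mathrm{C}4$--$\mathrm{C}5$. It also satisfies $\mathrm{C}1$--$\mathrm{C}2$, because the power is fixed at the feasible $\vec{P}^\star$.

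\textbf{Stability.} Suppose, for contradiction, that $(\user_k,\IRS_l)$ is a blocking pair: $\user_k$ strictly prefers $\IRS_l$ to its final match (or is unmatched), and $\IRS_l$ strictly prefers $\user_k$ to its final match (or is free). The key invariant is that a RIS's tentative partner only improves in its own ranking, since it swaps only for a strictly better proposer. Because $\user_k$ proposes in descending order and ranks $\IRS_l$ above its final outcome, $\user_k$ must have proposed to $\IRS_l$ at some point. At that moment $\IRS_l$ either rejected $\user_k$ or later displaced it, each time in favour of a UE it ranks higher. By the invariant, $\IRS_l$'s final partner is ranked at least as high as that UE, hence above $\user_k$. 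This contradicts the assumption that $\IRS_l$ prefers $\user_k$. So no blocking pair exists and $\vec{\Omega}^\star$ is stable.

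\textbf{Main obstacle.} The hardest part is not the classical argument itself but making it apply to the paper's setting. Two points need care. First, the preferences must not change while the matching runs. This requires that $\Psi_{k,l}$ is evaluated only once, with $\vec{P}^\star$ fixed and the other UEs' associations held constant. If association changes were allowed to alter the interference terms $g_{k,i}$, the preferences would have externalities and the invariant could fail. Second, Algorithm~\ref{algo:proposed} as written contains the line $\Omega_{k,1}=1$ and does not explicitly reset the displaced UE's $\Omega_{k',l}$. I would restate the bookkeeping precisely (set $\Omega_{k,l}=1$ and $\Omega_{k',l}=0$) and state explicitly the assumption that preferences are fixed before running the argument.
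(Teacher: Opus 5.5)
Your proposal is correct and follows essentially the paper's argument: each UE proposes to each RIS at most once, which bounds the number of proposals by $K\times L$, and a blocking pair is impossible because the UE must already have been turned down by that RIS in favour of a UE the RIS ranks higher. You are more careful than the paper in two useful places: you state the invariant that a RIS's tentative partner only improves, which covers a UE that was accepted and later displaced (a case the paper's one-line stability step passes over), and your termination count reads the loop condition as ``some unmatched UE still has an unapplied RIS,'' which is the reading actually needed, since the literal while-condition of Algorithm~\ref{algo:proposed} can cycle without making any proposals when $K>L$.
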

\begin{IEEEproof}
The proposed algorithm converges in polynomial time. In each iteration, every unmatched UE proposes to an unvisited RIS based on its preference list, and each RIS accepts or rejects proposals accordingly. Since each UE can propose to at most L RISs, the total number of proposals is bounded by $K\times L$. Moreover, the resulting association matrix $\Omega$ converges to a stable solution. Specifically, if a UE prefers another RIS over its final match, it must have previously proposed to that RIS and been rejected in favor of a more preferred UE. Therefore, no blocking pair exists, ensuring the stability of the matching.
\end{IEEEproof}
\section{Simulation Results and Discussion} \label{pref}
We consider a network consisting of multiple UEs served by a single AP equipped with $64$ antennas deployed over the coverage area.  The communication is assisted by multiple RISs, where each RIS is equipped with $100\times100$ reflectors with half-wavelength spacing ($M_y=M_z= \lambda/2$). The carrier frequency and the channel bandwidth (BW) are set to be $f_c =15$ GHz and $200$ MHz, respectively. The AP transmit power budget is $23$ dBm, while the noise power density ($N_0$) and noise figure (NF) are $-174$ dBm/Hz, and $10$ dB, respectively. Accordingly, the AWGN noise power at each UE is given by $\sigma_k^2=N_0+10\log_{10}(BW)+NF$~dBm. All algorithms are implemented in MATLAB, and unless otherwise specified, each data point is averaged over $10^6$ independent channel realizations to ensure statistical accuracy. The performance of the proposed scheme is evaluated against three benchmark approaches, where the transmit power allocation is obtained using the same SCA-based optimization framework in all cases for fair comparison. In the first benchmark, referred to as SCA-ES, it iterates all possible combinations to maximize the network reliability. In the second benchmark, referred to as SCA-GS, each UE greedily selects the RIS that maximizes the reliability. In the event that multiple UEs select the same RIS, one UE is randomly retained while the others are rejected. In the third benchmark, denoted as SCA-RS, the UE-RIS association is generated randomly.
\begin{figure}[!htp]
     \centering
\includegraphics[width=0.9\linewidth]{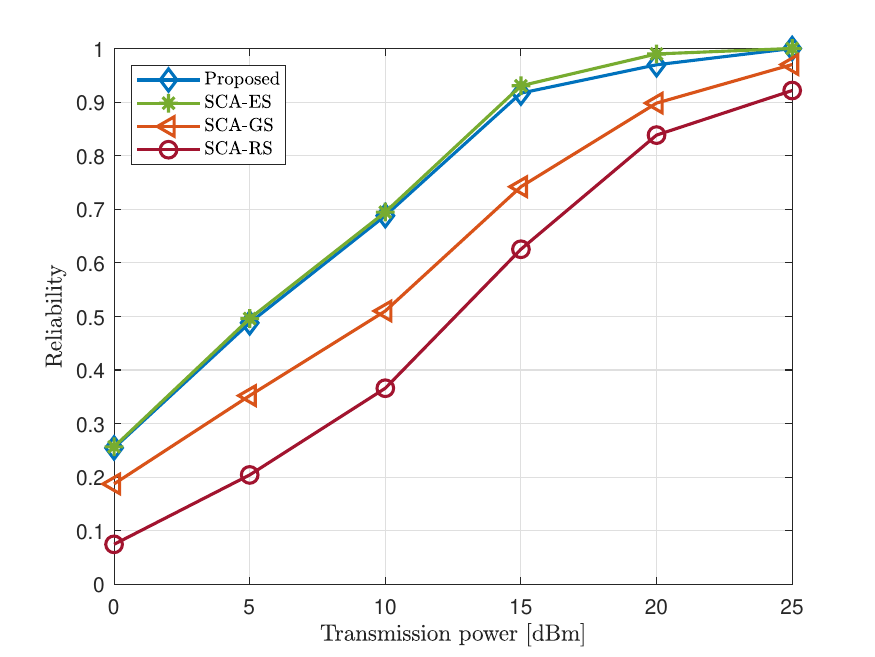}
\vspace{-0.4cm}\caption{Reliability as a function of the AP power budget.}\vspace{-0.2cm}
     \label{relibility_power} 
\end{figure}
The URLLC UE reliability as a function of the AP transmit power is illustrated in Fig.~\ref{relibility_power}. As expected, the reliability increases with transmit power for all considered schemes due to the enhanced received signal strength and improved SINR conditions. The proposed matching-based scheme achieves performance close to that of the SCA-ES benchmark and consistently outperforms the SCA-GS and SCA-RS baseline methods across the entire power range. This improvement is primarily attributed to the more efficient UE–RIS association enabled by the matching framework, which effectively exploits favorable channel conditions and available RIS resources. As a result, the proposed approach enables improved interference management and resource utilization, leading to enhanced overall reliability.
\begin{figure}[!htp]
     \centering
\includegraphics[width=0.9\linewidth]{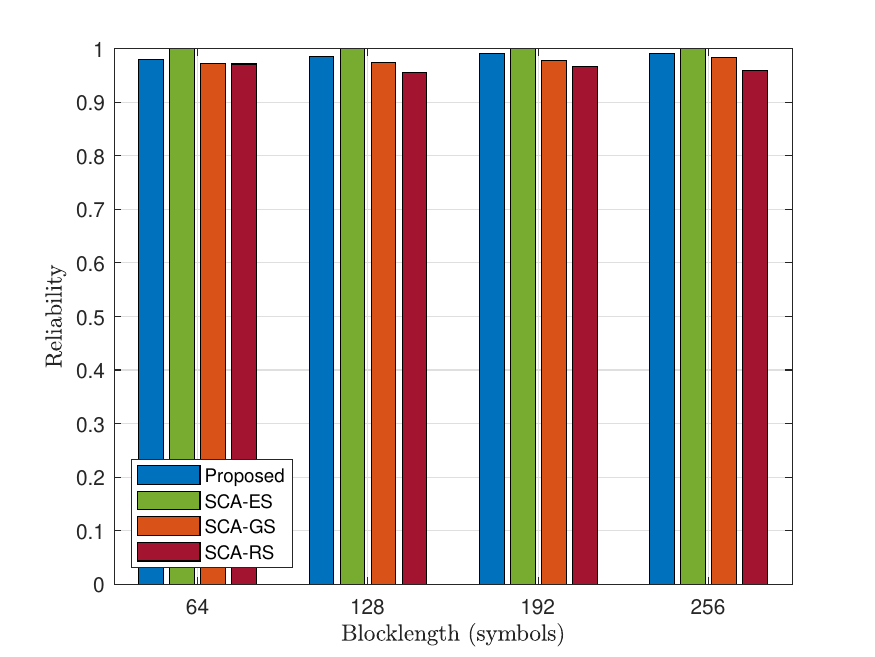}
   \vspace{-0.4cm} \caption{Reliability as a function of the finite blocklength.}\vspace{-0.2cm}
\label{relibility_blocklength}   
\end{figure}
The URLLC UE reliability as a function of the finite blocklength is illustrated in Fig.~\ref{relibility_blocklength}. As expected, the reliability improves with increasing blocklength for all considered schemes due to the enhanced coding gain and reduced decoding error probability. The proposed matching-based scheme achieves performance close to the SCA-ES benchmark and consistently outperforms the SCA-GS and SCA-RS baseline methods across all blocklength values, due to efficient UE-RIS association. Consequently, the proposed approach ensures improved resource utilization and interference management, leading to higher overall reliability.
\begin{figure}[!t]
     \centering
\includegraphics[width=0.9\linewidth]{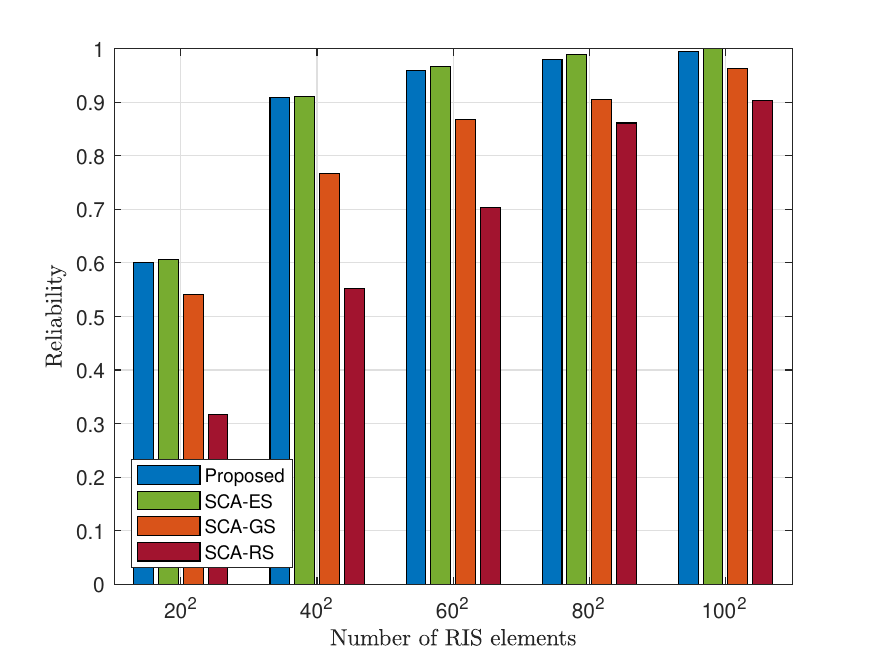}
   \vspace{-0.4cm} \caption{Reliability as a function of the number of RIS elements.}\vspace{-0.2cm}
\label{reliability_elements}  
\end{figure}
 The reliability as a function of the number of RIS elements is depicted in Fig.~\ref{reliability_elements}. As the number of reflecting elements increases, the reliability improves for all schemes due to the enhanced passive beamforming gain and more effective signal reflection. The proposed matching-based approach consistently outperforms the SCA-GS and SCA-RS baselines across all configurations and achieves performance close to that of the SCA-ES scheme. This improvement stems from the more efficient UE-RIS association provided by the matching framework, which aligns users with RISs offering more favorable channel conditions and resource availability.
\section{Conclusions}\label{conc}
This paper investigated reliability-oriented resource allocation in RIS-assisted FR3 networks. A joint transmit power allocation and UE-RIS association problem is formulated to maximize network reliability under finite blocklength constraints, capturing the stringent requirements of URLLC services. Due to the nonconvex and combinatorial nature of the problem, a multiphase solution framework is developed, integrating an SCA-based power optimization scheme with a matching-theory-based UE-RIS association algorithm. Simulation results demonstrated that the proposed approach achieves improved reliability performance compared with conventional greedy and random association schemes, while efficiently utilizing available resources. These results highlight the importance of jointly optimizing power allocation and RIS association in FR3 networks. Future work will extend this framework to incorporate active user detection and dynamic resource allocation for grant-free massive access scenarios. Specifically, jointly optimizing user activity detection, power control, and UE-RIS association under time-varying network conditions is a promising direction to improve scalability and robustness in RIS-assisted URLLC systems. 
\bibliographystyle{IEEEtran}
\bibliography{References}
\end{document}